\newcolumntype{C}[1]{>{\centering\arraybackslash}p{#1}}
\newcommand\Opt[4]{
  \ifthenelse{\isempty{#2}}%
  {\mbox{{#1}}}
  {\underset{{#2}}{\mbox{{#1}}}}
  \;
  \ifthenelse{\isempty{#4}}%
  {{#3}}
  {\Big\{ {#3}\  \Big| \  {#4} \Big\}}
}
\newcommand{\norm}[1]{\left\lVert#1\right\rVert}
\newcommand{\Plus}{\mathord{\begin{tikzpicture}[baseline=0ex, line width=1.5, scale=0.13]
\draw (1,0) -- (1,2);
\draw (0,1) -- (2,1);
\end{tikzpicture}}}
\newcommand{\Minus}{\mathord{\begin{tikzpicture}[baseline=0ex, line width=1.5, scale=0.1]
\draw (0,0.8) -- (2,0.8);
\end{tikzpicture}}}
\newtheorem{definition}{Definition}[section]
\newtheorem{proposition}{Proposition}[section]
\definecolor{formalshade}{rgb}{1,1,1}
\title{\LARGE \bf Homomorphically encrypted gradient descent algorithms \\ for quadratic programming}
\author{\normalsize Andr\'e Bertolace, Konstantinos Gatsis, Kostas
  Margellos\thanks{The authors are with the Department of Engineering
    Science, University of Oxford, Oxford OX1 3PJ, U.K.  E-mail:
    kostas.margellos@eng.ox.ac.uk, konstantinos.gatsis@eng.ox.ac.uk,
    andre.bertolace@eng.ox.ac.uk (corresponding author) }
  \thanks{\copyright 2023 IEEE.  Personal use of this material is
    permitted.  Permission from IEEE must be obtained for all other
    uses, in any current or future media, including
    reprinting/republishing this material for advertising or
    promotional purposes, creating new collective works, for resale or
    redistribution to servers or lists, or reuse of any copyrighted
    component of this work in other works.}}
\date{\today} 
\begin{document}

\maketitle 

\begin{abstract}
In this paper, we evaluate the different fully homomorphic encryption schemes, propose an implementation, and numerically analyze the applicability of gradient descent algorithms to solve quadratic programming in a homomorphic encryption setup. The limit on the multiplication depth of homomorphic encryption circuits is a major challenge for iterative procedures such as gradient descent algorithms. Our analysis not only quantifies these limitations on prototype examples, thus serving as a benchmark for future investigations, but also highlights additional trade-offs like the ones pertaining the choice of gradient descent or accelerated gradient descent methods, opening the road for the use of homomorphic encryption techniques in iterative procedures widely used in optimization based control. In addition, we argue that, among the available homomorphic encryption schemes, the one adopted in this work, namely CKKS, is the only suitable scheme for implementing gradient descent algorithms. The choice of the appropriate step size is crucial to the convergence of the procedure. The paper shows firsthand the feasibility of homomorphically encrypted gradient descent algorithms.
\end{abstract}


\IEEEpeerreviewmaketitle

\section{Introduction}

Homomorphic encryption (HE) is a ground-breaking mathematical method that enables the analysis or manipulation of encrypted data without revealing its content \cite{Gentry10}. In doing so, HE permits the secure delegation of data processing to third-party cloud providers. Several encryption schemes, such as Paillier \cite{Paillier99} or El Gamal \cite{ElGamal85} are partially HE schemes\footnote{Partially HE schemes enable the implementation of   either addition or multiplication on encrypted data, but not both,  whereas fully HE schemes enable the implementation of both addition and multiplication operations.}. In 2009, Gentry \cite{Gentry09} proposed the first fully HE scheme$^{\text{1}}$. The computational overhead of the scheme was significant, but it showed that such schemes are indeed possible. Since then these approaches have been further developed. Currently, the state of the art schemes are BFV \cite{BFV12a, BFV12b}, CKKS \cite{CKKS17}, BGV \cite{BGV12}, and GSW \cite{GSW13}. The computational overhead remains large but it has been brought down to a level where these schemes can be implemented in practice. 

In applications of control and decision-making, the benefits of delegating data processing without giving away access to the data are tremendous. The use of HE schemes applied in control theory is at its infancy, however, encrypted linear controllers have been implemented. Most results use partially HE schemes \cite{Alexandru21a, Alexandru21b, Darup19, Cheon18, Farokhi16, Farokhi17}, but approaches using fully HE schemes also exist \cite{Kim16, Darup18, Kosieradzki22}. In addition, \cite{Darup21} provides a detailed overview of the current status of research in the encrypted control for networked systems and \cite{Kim22} a comparison of different encrypted control approaches. 

Yet, the implementation of algorithms in a HE setup is far from trivial. For instance, many HE schemes use random noise to guarantee the security of the encrypted data. This noise compounds at every arithmetic operation, resulting in a limited number of sequential arithmetic operations performed by an encryption circuit, in special, multiplication operations.

In this paper, we would like to understand the limits imposed by HE computation on challenging computation tasks beyond the controller implementation problems studied in the literature. Specifically, we consider the problem of solving quadratic programming (QP) problems. QP is commonly used in several control problems like those arising in state estimation under minimum square error and model predictive control. Numerically solving such a task often requires \textit{iterative} methods (gradient descent) and the limit on the multiplication depth of HE circuits is a major challenge for iterative procedures. As a result, given the HE multiplication depth limits, we would like to determine the most appropriate iterative methods for QP. In our case, we adapt and implement gradient descent (GD) and accelerated gradient descent (AGD) algorithms to solve a QP in a HE manner. Our contributions are threefold:
\begin{enumerate}
\item We argue that among the available HE schemes CKKS is the only scheme suitable to handle GD and AGD iterations in a HE setup as it allows handling real-valued operations, an important feature, especially in the selection of an appropriate step-size that ensures the convergence of the underlying algorithm. 
\item We implement our own HE matrix multiplication algorithm that is more efficient, in terms of multiplication depth, than other algorithms proposed in the literature \cite{JKLS18}.
\item We demonstrate that in the HE setup the condition number of the matrix of the quadratic term of the objective function plays an important role in determining which algorithm is preferred. The encrypted version of AGD is the preferred algorithm only for matrices with higher condition numbers. This is in contrast to plain-text optimization, where AGD is typically preferable due to its superior convergence rate. The reason being once an encryption circuit is defined, the number of sequential arithmetic operations is fixed but at each step, AGD performs one extra multiplication when compared to GD. These extra iterations pays off, specially for matrices with lower conditional number.
\end{enumerate}

Other works in the literature, \cite{Alexandru21a}, also proposed to solve QPs in a secure/distributed manner. We differ from this work by using fully homomorphic encryption instead of partial homomorphic encryption schemes. Furthermore, \cite{Darup18}, presented an encrypted model predictive control scheme for linear constrained systems, also using partial homomorphic encryption schemes and not solving the QP but instead using the corresponding piece-wise affine control law if explicitly given.

It should be mentioned that our paper focuses on unconstrained quadratic problems due to inherent limitations in operations that are allowed to be performed by available HE schemes. However, our analysis not only quantifies these limitations on prototype examples thus serving as a benchmark for future investigations, but also highlights additional trade-offs like the ones pertaining the choice of GD or AGD methods, opening the road for the use of HE methods in iterative methods widely used in optimization based control.

Finally, to emphasize that this line of research is still at an early stage, we note that our implementation using state of the art HE tools (Microsoft SEAL) permits just a modest amount of gradient steps, beyond what would be required in practical applications.

The rest of the paper is organized as follows: Section \ref{sec:SecII} provides some background information on gradient methods, while Section \ref{sec:SecIII} discusses our choice for an HE scheme. Section \ref{sec:SecIV} discusses the implementation of the suggested scheme and provides an extension to the matrix multiplication operation. Section \ref{sec:SecV} provides a detailed numerical study on the application of GD and AGD algorithms for QPs, while Section \ref{sec:SecVI} concludes the paper.
\section{Descent algorithms for unconstrained quadratic programming} \label{sec:SecII}

QP has been a very successful tool for modeling many real-life problems. It is extensively used in applications that involve the variance minimization, such as in the formulation of portfolio optimization problems or in solving the ordinary least square (OLS) problem. In fact, many problems in physics on engineering can be formulated as some form of energy minimization problem, in which the energy can simply be formulated as a quadratic form,

\[
\Opt{min}{x \in \mathbb{R}^n}{\frac{1}{2} x^T Q x + p^T x}{},
\]
where $Q \in \mathbb{R}^{n \times n}$ and $p \in \mathbb{R}^n$. Note that the QP is convex if $Q \succcurlyeq 0$.

This unconstrained QP has a closed form solution, it requires however the inversion of a matrix, a procedure that involves other operations than additions and multiplications, posing hence a challenge for its implementation in a HE setup. An alternative solution is to use gradient descent methods to solve the QP problem. This is a class of iterative algorithms that provide a simple way \cite{Bubeck15} to minimize a differentiable function $f$, 
\[
\Opt{min}{x \in \mathbb{R}^n}{f(x)}{}.
\]

Starting at an initial estimate, it iteratively updates 
\[
x_{t+1} = x_t - \eta \nabla f (x_t),
\]

where $\nabla f (x_t)$ denotes the gradient of $f$ calculated at $x_t$ and $\eta$ the step-size, until reaching a desired tolerance in the solution. Particularly to the QP case, the gradient takes a linear form, $\nabla f(x) = Q x + p$.

Methods of this type have a convergence rate which is independent of the dimension $n$ of the solution space. This feature makes them particularly attractive for optimization in very high dimensions \cite{Bubeck15}. The convergence is however deeply linked to the step-size $\eta$, be it too small, the algorithm may take too long to converge, be it too high, it may diverge. 

Properties such as smoothness or strong convexity of the objective function $f$ do play a relevant role in choosing $\eta$ and a variant of the algorithm with faster convergence.

\begin{definition}
  A continuous differentiable function $f$ is $\beta$-smooth if the gradient $\nabla f$ is $\beta$-Lipschitz, i.e., \[
  \norm{\nabla f(x) - \nabla f(y)} \le \beta \norm{x-y}, ~\forall x,y \in \mathbb{R}^n.
  \]
\end{definition}

\begin{definition}
  A function $f$ is $\alpha$-strongly convex, with $\alpha > 0$, if for any $x$, $y$ it satisfies the following sub-gradient inequality, i.e., 
  \[
  f(x) - f(y) \le \nabla f(x)^T (x-y) - \frac{\alpha}{2} \norm{x-y}^2,~ \forall x,y \in \mathbb{R}^n.
  \]
\end{definition}

 Given these definitions, an immediate consequence is that if $f$ is twice differentiable, then $f$ is $\alpha$-strongly convex if the eigenvalues of the Hessian of $f$ are larger than or equal to $\alpha$. A quadratic $f$, as in our QP, is $\lambda_{\max}$-smooth and $\lambda_{\min}$-strongly convex, where $\lambda_{\max}, \lambda_{\min} > 0$ are, respectively, the maximum and minimum eigenvalues of the matrix $Q$. The ratio $\kappa=\frac{\lambda_{\max}}{\lambda_{\min}}$ is the condition number of the matrix $Q$ and it plays an important role in the convergence of descent algorithms. For reference, both Nesterov's accelerated gradient descent (AGD) and Gradient Descent (GD) methods ~\cite{Bubeck15} for the smooth and strongly convex quadratic function converge exponentially fast (see Table \ref{tbl:algoComparison}):

\begin{table}[h!]
  \centering
\begin{tabular}{C{1.25cm}|C{1.5cm}|C{2cm}|C{2cm}}
  $f$ & Algorithm & Convergence rate & Iterations needed \\ 
  \hline
  $\alpha$-conv., $\beta$-smooth & AGD & $R^2 \exp \biggl( - \frac{t}{\sqrt{\kappa}} \biggr) $ & $\sqrt{\kappa} \log \biggl( \frac{R^2}{\epsilon} \biggr)$ \\
  \hline
  $\alpha$-conv., $\beta$-smooth & GD & $R^2 \exp \biggl( - \frac{t}{\kappa} \biggr) $ & $\kappa \log \biggl( \frac{R^2}{\epsilon} \biggr)$
\end{tabular}
\caption{Convergence rate and other parameters for different algorithms. $R= \norm{x_1 - x^*}$ is the distance from the initial estimate to the optimal value. The number of iterations is directly derived from the convergence rate for a fixed tolerance (in terms of distance from optimal value) $\epsilon$.}
\label{tbl:algoComparison}
\end{table}


\section{Homomorphically encrypted arithmetic} \label{sec:SecIII}

\subsection{Homomorphic encryption schemes}

HE schemes have been developed using different approaches. BFV and BGV perform operations modulo integer whereas CKKS implements approximated fixed point arithmetics. The security of these schemes is based on the Ring Learning With Errors (RLWE) problem, a variant of the Learning With Errors problem (LWE), in which the goal is to distinguish random linear equations, which have been perturbed by a small amount of noise from uniform ones \cite{LWE12}. The HE cipher is defined by a pair $E, D$, of encryption-decryption algorithms respectively. $E$ takes a public key $pk$ along with a message $m$ as inputs and outputs a cipher-text $\mathtt{c}$, as $\mathtt{c} = E(pk,m)$. The decryption algorithm, $D$, takes a secret key $sk$ along with the cipher-text $\mathtt{c}$ as inputs and outputs the message $m = D(sk, \mathtt{c})$. The algorithms are parameterized by a security parameter $\lambda$ which plays a direct role in the derivation of the $sk$. In addition, these schemes exploit the structure of polynomial rings for its plain-text and cipher-text spaces, the cyclonomic polynomial, $R[\mathbb{Z}_q] = \mathbb{Z}_q[X]/(X^N + 1)$. All schemes make use of random  variables with values sampled from a discrete Gaussian distribution with a pre-defined variance and random variables sampled from a ternary distribution $\{-1, 0, 1\}$ \cite{CKKS17}. 

%
\textbf{BFV:} In the BFV scheme, \cite{BFV12a, BFV12b}, the plain-text and cipher-text spaces are defined by two distinct rings, $R[\mathbb{Z}_t]$ and $R[\mathbb{Z}_q]$, where $t$ and $q$ are parameters of the plain-text and cipher-text coefficients, respectively. 

\textbf{BGV:} The BGV scheme, \cite{BGV12}, is similar to the BFV. The plain-text and cipher-text spaces are defined by two distinct rings, $R[\mathbb{Z}_t]$ and $R[\mathbb{Z}_q]$. 

\textbf{CKKS:} The CKKS scheme, \cite{CKKS17}, is often quoted as being the most efficient method to perform approximate HE computations over real and complex numbers \cite{Kim20}. It can be considered as a noisy channel \cite{Lee20}, a simple encryption/decryption procedure adds noise to the original message. The scheme exploits the structure of integer polynomial rings for its plain-text and cipher-text spaces, $R[\mathbb{Z}_q]$ and $R[\mathbb{C}]$. This polynomial ring is combined with a canonical embedding transformation $\sigma : \mathcal{S}\rightarrow \mathbb{C}^N$ that encodes/decodes a vector\footnote{The space size is actually $N/2$ because the roots of the cyclonomic polynomial lie on the unit circle and are pairwise complex conjugate.} in $\mathbb{C}^N$ to/from the ring of cyclonomical complex polynomials $R[\mathbb{C}]$. To encode a message $x \in R[\mathbb{C}]$ one applies the inverse embedding transformation to get $\mu=\sigma^{-1}(x) \in R[\mathbb{C}]$, then scale $\mu$ by a factor $\Delta = 2^p$ and round to obtain the plain-text $m=\lfloor \Delta \cdot \mu \rceil \in R[\mathbb{Z}_q]$.

%
%
%

\subsection{Scheme choice}
We claim that the most suitable choice for HE versions of the GD algorithm (similar considerations hold for the AGD one) is the CKKS scheme. The main reason for such claim is related to the selection of the step-size $\eta$. Note that $\eta$ should be sufficiently small for GD to converge. This is summarized in the following proposition; it is a standard result but we present a proof below for completeness.

\begin{proposition}
  Consider a QP with $Q \succcurlyeq 0$ with $Q=Q^\top$, and let $\lambda_{\max}$ denote the maximum eigenvalue of $Q$. The GD method converges for any $\eta < \frac{2}{\lambda_{\max}}$.
\end{proposition}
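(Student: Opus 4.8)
The plan is to reduce the seemingly nonlinear iteration to a single linear recursion on the optimization error and then read off convergence directly from the spectrum of one matrix. First I would exploit the optimality condition: since $f(x) = \frac{1}{2}x^\top Q x + p^\top x$ is convex for $Q \succcurlyeq 0$, any minimizer $x^*$ satisfies $\nabla f(x^*) = Q x^* + p = 0$. I would work under $Q \succ 0$ (consistent with the earlier assumption $\lambda_{\min} > 0$), so that $x^* = -Q^{-1}p$ exists and is unique. Substituting $p = -Q x^*$ into the GD update $x_{t+1} = x_t - \eta(Q x_t + p)$ yields, for the error $e_t := x_t - x^*$, the exact linear recursion $e_{t+1} = (I - \eta Q)e_t$, and hence $e_t = (I - \eta Q)^t e_0$.

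Next I would invoke the spectral theorem. Because $Q = Q^\top$, I can write $Q = U \Lambda U^\top$ with $U$ orthogonal and $\Lambda = \mathrm{diag}(\lambda_1, \dots, \lambda_n)$, where each $\lambda_i > 0$. Setting $\tilde{e}_t = U^\top e_t$ decouples the recursion coordinate-wise into $\tilde{e}_{t+1,i} = (1 - \eta \lambda_i)\,\tilde{e}_{t,i}$, so that $\tilde{e}_{t,i} = (1 - \eta \lambda_i)^t\,\tilde{e}_{0,i}$. Since $U$ is an isometry, $\norm{e_t} = \norm{\tilde{e}_t} \to 0$ if and only if $|1 - \eta \lambda_i| < 1$ for every $i$.

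Finally I would translate this spectral condition into the stated bound on $\eta$. For each positive eigenvalue, $|1 - \eta \lambda_i| < 1$ is equivalent to $0 < \eta < 2/\lambda_i$; the binding constraint is the one associated with the largest eigenvalue, giving $\eta < \frac{2}{\lambda_{\max}}$. Conversely, any such $\eta$ makes all factors strictly contractive, so $\norm{e_t} \le \rho^t \norm{e_0} \to 0$ with $\rho = \max_i |1 - \eta \lambda_i| < 1$, establishing geometric convergence.

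I expect the only delicate point to be the positive semidefinite boundary case. If $Q$ has a zero eigenvalue, the corresponding mode satisfies $1 - \eta \cdot 0 = 1$ and fails to contract, and moreover $x^*$ need not exist unless $p$ lies in the range of $Q$. I would therefore either state the result under the stronger hypothesis $Q \succ 0$ (as the surrounding text already does via $\lambda_{\min} > 0$), or restrict the convergence claim to the component of the iterate lying in the range of $Q$. Everything else follows routinely from diagonalizing a symmetric matrix.
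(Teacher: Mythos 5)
Your proof is correct and follows essentially the same route as the paper's: both reduce GD to the exact error recursion $e_{t+1} = (I - \eta Q)e_t$ and conclude convergence from the spectrum of the iteration matrix $I - \eta Q$. Yours is in fact slightly more careful than the paper's version, which loosely asks for the ``maximum eigenvalue of $I - \eta Q$ less than $1$'' where the correct requirement is the two-sided condition $|1 - \eta \lambda_i| < 1$ for all $i$ (implicitly forcing $\eta > 0$), and which silently passes over the degenerate case $\lambda_{\min} = 0$ of $Q \succcurlyeq 0$ that you rightly flag as needing either $Q \succ 0$ or a range-space restriction.
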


\begin{proof}
  Given $f(x) = \frac{1}{2} x^T Q x + p^T x$, we have that $\nabla f(x) = Q x + p$ and the iterative GD procedure takes the form $x_{t+1} = (I - \eta Q) x_t - \eta p$. Let $x^*$ be an unconstrained minimizer of $f$. As such, $\nabla f(x^*) = Q x^* + p = 0$, which in turn implies that $x^* = (I - \eta Q) x^* - \eta p$. We thus have that $x_{t+1} - x^* = (I - \eta Q) (x_t - x^*)$ and consequently $\norm{x_{t+1} - x^*} \leq \norm{I - \eta Q} \norm{(x_t - x^*)} \leq \norm{I - \eta Q}^{t+1} \norm{(x_0 - x^*)}$. The latter implies that $\lim_{t \to \infty} \norm{x_{t+1} - x^*} = 0$ if the maximum eigenvalue of $(I - \eta Q)$ is less than $1$, which can be achieved if $\eta < \frac{2}{\lambda_{\max}}$.
\end{proof}

A direct consequence of this fact is that if using BGV or BFV that require integer step-sizes, one can only ensure convergence for matrices with $\lambda_{\max} < 2$ that is the only choice that allows for an integer step-size $\eta$. The minimum then value of such step-size would be $\eta =1$, which in turn may lead to an erratic numerical behaviour. Additionally, to be able to use BFV or BGV, one would need to limit the calculations to integer matrices $Q \in \mathbb{Z}^{n \times n}$, or manipulate $Q \in \mathbb{R}^{n \times n}$ to be made integer. Towards this direction, \cite{Kim22, Kim23}, suggest the following manipulations:
\begin{itemize}
    \item "Scaling-up" the real numbers by a factor, say $10^8$, replicating a fixed point arithmetic, and proceed by calculating using the given integer numbers.
    \item Converting the matrix $Q$ by finding an invertible matrix $T \in \mathbb{R}^{n \times n} $ such that $T Q T^{-1} \in \mathbb{Z}^{n \times n}$.
\end{itemize}

The former is not a practical solution as the result of multiple multiplications will overflow and the output after the decryption will be incorrect \cite{Kim22}, whereas the latter implies limiting ourselves to matrices $Q$ in which every eigenvalue has an integer real and imaginary part \cite{Kim22, Kim23}.
In summary, BGV and BFV are only suitable schemes for integer matrices or matrices that have integer eigenvalues, which for our setting would require $\lambda_{\max} <2$. This would imply working only with identity matrices, $Q=I$, if we working with integer matrices which are symmetric and positive definite.  As such, for the purpose of an iterative methodology like GD and AGD, CKKS is preferable. 

\section{Homomorphically encrypted gradient descent algorithms} \label{sec:SecIV}
\subsection{Algorithm description}

For the HE version of gradient descent methods, let us start by defining the following arithmetic operators:
\begin{itemize}
\item $\Plus$/$\Minus$: the addition/subtraction of two cipher-texts;
\item $\bullet$: the multiplication of two cipher-texts;
\item $\odot$: the multiplication of a plain-text and a cipher-text;
\end{itemize}

Let us further assume that the user calculates $\lambda_{\min}$, $\lambda_{\max}$, and sends these as plain-text, i.e. not encrypted, values to the solver. Together with these constants, the user also sends the encrypted matrix and vector, $\mathtt{Q}=E(pk, Q)$ and $\mathtt{p} = E(pk, p)$ that determine the QP. The encrypted version of the descent algorithms will still proceed in an iterative fashion. The only difference is that one would be iterating over cipher-texts $\mathtt{c_t}$ instead of plain-text $x_t$. When iterating over cipher-texts, two steps deserve special attention, the stopping criteria $\left|\mathtt{c_{t+1}} \Minus \mathtt{c_t}\right| > \epsilon$ and the matrix multiplication procedure, referred to $\textsc{MMult}$ (Algorithm \ref{algo:MMult}) and discussed in the sequel. The latter is relevant because of the exponential growth of the noise level with the multiplication depth. The authors in \cite{JKLS18} do propose a fully-homomorphic matrix multiplication algorithm to perform these operations in a more efficient manner. On the other hand, for the stopping rule, determining whether an encrypted value is larger than another encrypted value or even a plain-text without decrypting both values is directly not feasible, but complex approaches to implement comparisons have appeared in \cite{comparisonBFV} for BFV and \cite{comparisonCKKS} for CKKS.

Given the challenge to implement the stopping rule in an HE setup, we propose that the HE version of the AGD algorithm is slightly modified:
\begin{itemize}
\item Instead of specifying the tolerance $\epsilon$, the user fixes the number of iterations $N$. Given $N$ and the convergence rate of the algorithms we can infer the attainable tolerance (Table \ref{tbl:algoComparison}).
\item The user may hand in the initial estimate $\mathtt{x_0}$, although this is not necessary.
\end{itemize}

The HE versions of the the AGD and GD still follow an iterative procedure. These take the form of (Algorithm \ref{algo:HEAGDQP}) and (Algorithm \ref{algo:HEGDQP}) respectively and are very similar to the usual AGD and GD algorithms. The main difference is the use of HE arithmetic operators and the special $\textsc{MMult}$ matrix multiplication procedure.

\begin{algorithm}[h!]
\caption{HE AGD for an unconstrained QP}
\label{algo:HEAGDQP}
\begin{algorithmic}[1]
  \Function{HEagdQP}{$\mathtt{Q}, \mathtt{p}, d, \lambda_{\min}, \lambda_{\max}, \mathtt{x_0}, N$}
  \State $\kappa \gets \frac{\lambda_{\max}}{\lambda_{\min}}$
  \State $\mathtt{x_-} \gets \mathtt{x_0}$
  \State $\mathtt{y_-} \gets \mathtt{x_0}$
  \State $\eta \gets  \frac{-1}{\lambda_{\max}}$
  \For {$t = 0$ to $N-1$}
  \State $\mathtt{y_+} \gets \mathtt{x_-} \Plus \textsc{MMult}(\mathtt{Q}, \mathtt{x_-}, d, \eta) \Plus \eta \odot \mathtt{p}$
  \State $\mathtt{x_+} \gets \biggl( 1 + \frac{\sqrt{\kappa} -1}{\sqrt{\kappa} + 1} \biggr)  \odot \mathtt{y_+} \Minus  \frac{\sqrt{\kappa} -1}{\sqrt{\kappa} +1} \odot \mathtt{y_-}$
  \State $\textsc{Relinearize}(\mathtt{x_+})$
  \State $\mathtt{y_-} \gets \mathtt{y_+}$
  \State $\mathtt{x_-} \gets \mathtt{x_+}$
  \EndFor
  \State \textbf{return} $\mathtt{x_+}$
  \EndFunction
\end{algorithmic}
\end{algorithm}

\begin{algorithm}[h!]
\caption{HE GD for an unconstrained QP}
\label{algo:HEGDQP}
\begin{algorithmic}[1]
  \Function{HEgdQP}{$\mathtt{Q}, \mathtt{p}, d, \lambda_{\min}, \lambda_{\max},
    \mathtt{x_0}, N$}
  \State $\kappa \gets \frac{\lambda_{\max}}{\lambda_{\min}}$
  \State $\mathtt{x_-} \gets \mathtt{x_0}$
  \State $\eta \gets  \frac{-2}{\lambda_{\min}+ \lambda_{\max}}$
  \For
  {$t = 0$ to $N-1$}
  \State
  $\mathtt{x_+} \gets \mathtt{x_-} \Plus \textsc{MMult}(\mathtt{Q},
  \mathtt{x_-}, d, \eta) \Plus \eta \odot \mathtt{p}$
  \State
  $\mathtt{x_+} \gets \mathtt{x_-}$
  \EndFor
  \State \textbf{return}
  $\mathtt{x_+}$
  \EndFunction
\end{algorithmic}
\end{algorithm}

The only difference between the two algorithms is the presence of the extra two $\odot$ and one $\Minus$ operations on line 8 of the (Algorithm \ref{algo:HEAGDQP}) when compared to (Algorithm \ref{algo:HEGDQP}). These operations are intrinsic to the accelerated gradient method as the method uses additional past information to update to the next step.

\subsection{Matrix multiplication seen differently}

Because we would like to study gradient descent methods, let us consider a simple matrix multiplication. Halevi and Shoup \cite{Halevi18} introduced a method (Algorithm \ref{algo:HaleviShoup}) to evaluate an arbitrary linear transformation on encrypted vectors. They exploit the diagonal encoding of a matrix to easily express the matrix-vector multiplication by combining rotation and constant multiplication operations

\begin{algorithm}[h!]
\caption{Halevi-Shoup \textsc{LinTrans} algorithm}
\label{algo:HaleviShoup}
\begin{algorithmic}
  \Function{LinTrans}{$\mathtt{c}, U$}
  \State $n \gets \dim(U)$
  \State $\mathtt{cU} \gets \mathtt{c} \odot u_0$
  \For {$l = 1$ to $n-1$}
    \State $\mathtt{cU} \gets \mathtt{cU} \Plus \textsc{Rot}(\mathtt{c}, l) \odot u_l$
  \EndFor
  \State $\textsc{Relinearize}(\mathtt{cU})$
  \State \textbf{return} $\mathtt{cU}$
  \EndFunction
\end{algorithmic}
\end{algorithm}

In \cite{JKLS18} the authors elaborated further on the Halevi-Shoup method and proposed a new matrix multiplication scheme (JKLS) that allows for a ciphered-matrix multiplication that uses only one cipher-text per matrix following a row-ordering encoding $A \rightarrow a$. Although convenient, the JKLS algorithm needs 2 $\odot$ operations \cite{JKLS18}.


We propose a modified version of a matrix multiplication algorithm with 1 less $\odot$ multiplication step:
\begin{align*}
  a_k &= V_k \odot a \ , b_k = W_k \odot b \ , \ k=0,\ldots,d-1 \\
  ab &= \sum_{k=0}^{d-1} a_k \bullet b_k
\end{align*}

with:
\begin{equation*}
  V_k (d \cdot i + j, l)= \begin{cases}
      1 & \text{if } l= d \cdot i + [i + j + k]_d\\
      0 & \text{otherwise}\\
    \end{cases} 
\end{equation*} 
\begin{equation*}
  W_k(d \cdot i + j, l)= \begin{cases}
      1, & \text{if } l = d \cdot [i + j + k]_d + j\\
      0, & \text{otherwise}\\
    \end{cases} 
\end{equation*}

where $[\cdot]_d$ is a shortcut for $\cdot$ modulo $d$.
\begin{figure*}[ht]
  \begin{center}
    \resizebox{0.75\textwidth}{!}{
    \begin{tikzpicture}[every matrix/.style={matrix of nodes,
        nodes={draw, minimum size=1.5mm, anchor=center,
          inner sep=0pt, line width=0.5pt}, 
        column sep=-0.5pt, row sep=-0.5pt, font=\scriptsize}]

\matrix [ampersand replacement=\&] (V0) { 
|[fill=blue!50]| \& $ $ \& $ $ \& $ $ \& \& \& \& \& \& \& \& \& \& \& \& \\ 
$ $ \& |[fill=blue!50]| \& $ $ \& $ $ \& \& \& \& \& \& \& \& \& \& \& \& \\ 
$ $ \& $ $ \& |[fill=blue!50]| \& $ $ \& \& \& \& \& \& \& \& \& \& \& \& \\ 
$ $ \& $ $ \& $ $ \& |[fill=blue!50]| \& \& \& \& \& \& \& \& \& \& \& \& \\ 
\& \& \& \& $ $ \& |[fill=gray]| \& $ $ \& $ $ \& \& \& \& \& \& \& \& \& \\ 
\& \& \& \& $ $ \& $ $ \& |[fill=gray]| \& $ $ \& \& \& \& \& \& \& \& \& \\ 
\& \& \& \& $ $ \& $ $ \& $ $ \& |[fill=gray]| \& \& \& \& \& \& \& \& \& \\ 
\& \& \& \& |[fill=gray]| \& $ $ \& $ $ \& $ $ \& \& \& \& \& \& \& \& \& \\ 
\& \& \& \& \& \& \& \& $ $ \& $ $ \& |[fill=blue!20]| \& $ $ \& \& \& \& \\ 
\& \& \& \& \& \& \& \& $ $ \& $ $ \& $ $ \& |[fill=blue!20]| \& \& \& \& \\ 
\& \& \& \& \& \& \& \& |[fill=blue!20]| \& $ $ \& $ $ \& $ $ \& \& \& \& \\ 
\& \& \& \& \& \& \& \& $ $ \& |[fill=blue!20]| \& $ $ \& $ $ \& \& \& \& \\ 
\& \& \& \& \& \& \& \& \& \& \& \& $ $ \& $ $ \& $ $ \& |[fill=blue!80]| \\
\& \& \& \& \& \& \& \& \& \& \& \& |[fill=blue!80]| \& $ $ \& $ $ \& $ $ \\ 
\& \& \& \& \& \& \& \& \& \& \& \& $ $ \& |[fill=blue!80]| \& $ $ \& $ $ \\ 
\& \& \& \& \& \& \& \& \& \& \& \& $ $ \& $ $ \& |[fill=blue!80]| \& $ $ \\ 
};

\matrix [ampersand replacement=\&, right=1.5cm of V0] (V1) {
$ $ \& |[fill=gray]| \& $ $ \& $ $ \& \& \& \& \& \& \& \& \& \& \& \& \& \\ 
$ $ \& $ $ \& |[fill=gray]| \& $ $ \& \& \& \& \& \& \& \& \& \& \& \& \& \\ 
$ $ \& $ $ \& $ $ \& |[fill=gray]| \& \& \& \& \& \& \& \& \& \& \& \& \& \\ 
|[fill=gray]| \& $ $ \& $ $ \& $ $ \& \& \& \& \& \& \& \& \& \& \& \& \& \\ 
\& \& \& \& $ $ \& $ $ \& |[fill=blue!20]| \& $ $ \& \& \& \& \& \& \& \& \\ 
\& \& \& \& $ $ \& $ $ \& $ $ \& |[fill=blue!20]| \& \& \& \& \& \& \& \& \\ 
\& \& \& \& |[fill=blue!20]| \& $ $ \& $ $ \& $ $ \& \& \& \& \& \& \& \& \\ 
\& \& \& \& $ $ \& |[fill=blue!20]| \& $ $ \& $ $ \& \& \& \& \& \& \& \& \\ 
\& \& \& \& \& \& \& \& $ $ \& $ $ \& $ $ \& |[fill=blue!80]| \& \& \& \& \\
\& \& \& \& \& \& \& \& |[fill=blue!80]| \& $ $ \& $ $ \& $ $ \& \& \& \& \\ 
\& \& \& \& \& \& \& \& $ $ \& |[fill=blue!80]| \& $ $ \& $ $ \& \& \& \& \\ 
\& \& \& \& \& \& \& \& $ $ \& $ $ \&|[fill=blue!80]| \& $ $ \& \& \& \& \\ 
\& \& \& \& \& \& \& \& \& \& \& \& |[fill=blue!50]| \& $ $ \& $ $ \& $ $ \\ 
\& \& \& \& \& \& \& \& \& \& \& \& $ $ \& |[fill=blue!50]| \& $ $ \& $ $ \\ 
\& \& \& \& \& \& \& \& \& \& \& \& $ $ \& $ $ \& |[fill=blue!50]| \& $ $ \\ 
\& \& \& \& \& \& \& \& \& \& \& \& $ $ \& $ $ \& $ $ \& |[fill=blue!50]| \\ 
};

\matrix [ampersand replacement=\&, right=2.5cm of V1] (V3) {
$ $ \& $ $ \& $ $ \& |[fill=blue!80]| \& \& \& \& \& \& \& \& \& \& \& \& \\ 
|[fill=blue!80]| \& $ $ \& $ $ \& $ $ \& \& \& \& \& \& \& \& \& \& \& \& \\ 
$ $ \& |[fill=blue!80]| \& $ $ \& $ $ \& \& \& \& \& \& \& \& \& \& \& \& \\ 
$ $ \& $ $ \& |[fill=blue!80]| \& $ $ \& \& \& \& \& \& \& \& \& \& \& \& \\ 
\& \& \& \& |[fill=blue!50]| \& $ $ \& $ $ \& $ $ \& \& \& \& \& \& \& \& \\ 
\& \& \& \& $ $ \& |[fill=blue!50]| \& $ $ \& $ $ \& \& \& \& \& \& \& \& \\ 
\& \& \& \& $ $ \& $ $ \& |[fill=blue!50]| \& $ $ \& \& \& \& \& \& \& \& \\ 
\& \& \& \& $ $ \& $ $ \& $ $ \& |[fill=blue!50]| \& \& \& \& \& \& \& \& \\ 
\& \& \& \& \& \& \& \& $ $ \& |[fill=gray]| \& $ $ \& $ $ \\ 
\& \& \& \& \& \& \& \& $ $ \& $ $ \& |[fill=gray]| \& $ $ \\ 
\& \& \& \& \& \& \& \& $ $ \& $ $ \& $ $ \& |[fill=gray]| \\ 
\& \& \& \& \& \& \& \& |[fill=gray]| \& $ $ \& $ $ \& $ $ \\ 
\& \& \& \& \& \& \& \& \& \& \& \& $ $ \& $ $ \& |[fill=blue!20]| \& $ $ \\
\& \& \& \& \& \& \& \& \& \& \& \& $ $ \& $ $ \& $ $ \& |[fill=blue!20]| \\ 
\& \& \& \& \& \& \& \& \& \& \& \& |[fill=blue!20]| \& $ $ \& $ $ \& $ $ \\ 
\& \& \& \& \& \& \& \& \& \& \& \& $ $ \& |[fill=blue!20]| \& $ $ \& $ $ \\ 
};

\matrix [ampersand replacement=\&, below=0.5cm of V0] (W0) { 
|[fill=blue!50]| \& $ $ \& $ $ \& $ $ \& $ $ \& $ $ \& $ $ \& $ $ \& $ $ \& $ $ \& $ $ \& $ $ \& $ $ \& $ $ \& $ $ \& $ $ \\ 
$ $ \& $ $ \& $ $ \& $ $ \& $ $ \& |[fill=blue!50]| \& $ $ \& $ $ \& $ $ \& $ $ \& $ $ \& $ $ \& $ $ \& $ $ \& $ $ \& $ $ \\ 
$ $ \& $ $ \& $ $ \& $ $ \& $ $ \& $ $ \& $ $ \& $ $ \& $ $ \& $ $ \& |[fill=blue!50]| \& $ $ \& $ $ \& $ $ \& $ $ \& $ $ \\ 
$ $ \& $ $ \& $ $ \& $ $ \& $ $ \& $ $ \& $ $ \& $ $ \& $ $ \& $ $ \& $ $ \& $ $ \& $ $ \& $ $ \& $ $ \& |[fill=blue!50]| \\ 
$ $ \& $ $ \& $ $ \& $ $ \& $ $ \& |[fill=gray]| \& $ $ \& $ $ \& $ $ \& $ $ \& $ $ \& $ $ \& $ $ \& $ $ \& $ $ \& $ $ \\ 
$ $ \& $ $ \& $ $ \& $ $ \& $ $ \& $ $ \& $ $ \& $ $ \& $ $ \& $ $ \& |[fill=gray]| \& $ $ \& $ $ \& $ $ \& $ $ \& $ $ \\ 
$ $ \& $ $ \& $ $ \& $ $ \& $ $ \& $ $ \& $ $ \& $ $ \& $ $ \& $ $ \& $ $ \& $ $ \& $ $ \& $ $ \& $ $ \& |[fill=gray]| \\ 
|[fill=gray]| \& $ $ \& $ $ \& $ $ \& $ $ \& $ $ \& $ $ \& $ $ \& $ $ \& $ $ \& $ $ \& $ $ \& $ $ \& $ $ \& $ $ \& $ $ \\ 
$ $ \& $ $ \& $ $ \& $ $ \& $ $ \& $ $ \& $ $ \& $ $ \& $ $ \& $ $ \& |[fill=blue!20]| \& $ $ \& $ $ \& $ $ \& $ $ \& $ $ \\ 
$ $ \& $ $ \& $ $ \& $ $ \& $ $ \& $ $ \& $ $ \& $ $ \& $ $ \& $ $ \& $ $ \& $ $ \& $ $ \& $ $ \& $ $ \& |[fill=blue!20]| \\ 
|[fill=blue!20]| \& $ $ \& $ $ \& $ $ \& $ $ \& $ $ \& $ $ \& $ $ \& $ $ \& $ $ \& $ $ \& $ $ \& $ $ \& $ $ \& $ $ \& $ $ \\ 
$ $ \& $ $ \& $ $ \& $ $ \& $ $ \& |[fill=blue!20]| \& $ $ \& $ $ \& $ $ \& $ $ \& $ $ \& $ $ \& $ $ \& $ $ \& $ $ \& $ $ \\ 
$ $ \& $ $ \& $ $ \& $ $ \& $ $ \& $ $ \& $ $ \& $ $ \& $ $ \& $ $ \& $ $ \& $ $ \& $ $ \& $ $ \& $ $ \& |[fill=blue!80]| \\ 
|[fill=blue!80]| \& $ $ \& $ $ \& $ $ \& $ $ \& $ $ \& $ $ \& $ $ \& $ $ \& $ $ \& $ $ \& $ $ \& $ $ \& $ $ \& $ $ \& $ $ \\ 
$ $ \& $ $ \& $ $ \& $ $ \& $ $ \& |[fill=blue!80]| \& $ $ \& $ $ \& $ $ \& $ $ \& $ $ \& $ $ \& $ $ \& $ $ \& $ $ \& $ $ \\ 
$ $ \& $ $ \& $ $ \& $ $ \& $ $ \& $ $ \& $ $ \& $ $ \& $ $ \& $ $ \& |[fill=blue!80]| \& $ $ \& $ $ \& $ $ \& $ $ \& $ $ \\ 
};

\matrix [ampersand replacement=\&, right=1.5cm of W0] (W1) {
$ $ \& $ $ \& $ $ \& $ $ \& $ $ \& |[fill=blue!50]| \& $ $ \& $ $ \& $ $ \& $ $ \& $ $ \& $ $ \& $ $ \& $ $ \& $ $ \& $ $ \\ 
$ $ \& $ $ \& $ $ \& $ $ \& $ $ \& $ $ \& $ $ \& $ $ \& $ $ \& $ $ \& |[fill=blue!50]| \& $ $ \& $ $ \& $ $ \& $ $ \& $ $ \\ 
$ $ \& $ $ \& $ $ \& $ $ \& $ $ \& $ $ \& $ $ \& $ $ \& $ $ \& $ $ \& $ $ \& $ $ \& $ $ \& $ $ \& $ $ \& |[fill=blue!50]| \\ 
|[fill=blue!50]| \& $ $ \& $ $ \& $ $ \& $ $ \& $ $ \& $ $ \& $ $ \& $ $ \& $ $ \& $ $ \& $ $ \& $ $ \& $ $ \& $ $ \& $ $ \\ 
$ $ \& $ $ \& $ $ \& $ $ \& $ $ \& $ $ \& $ $ \& $ $ \& $ $ \& $ $ \& |[fill=gray]| \& $ $ \& $ $ \& $ $ \& $ $ \& $ $ \\ 
$ $ \& $ $ \& $ $ \& $ $ \& $ $ \& $ $ \& $ $ \& $ $ \& $ $ \& $ $ \& $ $ \& $ $ \& $ $ \& $ $ \& $ $ \& |[fill=gray]| \\ 
|[fill=gray]| \& $ $ \& $ $ \& $ $ \& $ $ \& $ $ \& $ $ \& $ $ \& $ $ \& $ $ \& $ $ \& $ $ \& $ $ \& $ $ \& $ $ \& $ $ \\ 
$ $ \& $ $ \& $ $ \& $ $ \& $ $ \& |[fill=gray]| \& $ $ \& $ $ \& $ $ \& $ $ \& $ $ \& $ $ \& $ $ \& $ $ \& $ $ \& $ $ \\ 
$ $ \& $ $ \& $ $ \& $ $ \& $ $ \& $ $ \& $ $ \& $ $ \& $ $ \& $ $ \& $ $ \& $ $ \& $ $ \& $ $ \& $ $ \& |[fill=blue!20]| \\ 
|[fill=blue!20]| \& $ $ \& $ $ \& $ $ \& $ $ \& $ $ \& $ $ \& $ $ \& $ $ \& $ $ \& $ $ \& $ $ \& $ $ \& $ $ \& $ $ \& $ $ \\ 
$ $ \& $ $ \& $ $ \& $ $ \& $ $ \& |[fill=blue!20]| \& $ $ \& $ $ \& $ $ \& $ $ \& $ $ \& $ $ \& $ $ \& $ $ \& $ $ \& $ $ \\ 
$ $ \& $ $ \& $ $ \& $ $ \& $ $ \& $ $ \& $ $ \& $ $ \& $ $ \& $ $ \& |[fill=blue!20]| \& $ $ \& $ $ \& $ $ \& $ $ \& $ $ \\ 
|[fill=blue!80]| \& $ $ \& $ $ \& $ $ \& $ $ \& $ $ \& $ $ \& $ $ \& $ $ \& $ $ \& $ $ \& $ $ \& $ $ \& $ $ \& $ $ \& $ $ \\ 
$ $ \& $ $ \& $ $ \& $ $ \& $ $ \& |[fill=blue!80]| \& $ $ \& $ $ \& $ $ \& $ $ \& $ $ \& $ $ \& $ $ \& $ $ \& $ $ \& $ $ \\ 
$ $ \& $ $ \& $ $ \& $ $ \& $ $ \& $ $ \& $ $ \& $ $ \& $ $ \& $ $ \& |[fill=blue!80]| \& $ $ \& $ $ \& $ $ \& $ $ \& $ $ \\ 
$ $ \& $ $ \& $ $ \& $ $ \& $ $ \& $ $ \& $ $ \& $ $ \& $ $ \& $ $ \& $ $ \& $ $ \& $ $ \& $ $ \& $ $ \& |[fill=blue!80]| \\ 
};

\matrix [ampersand replacement=\&, right=2.5cm of W1] (W3) {
$ $ \& $ $ \& $ $ \& $ $ \& $ $ \& $ $ \& $ $ \& $ $ \& $ $ \& $ $ \& $ $ \& $ $ \& $ $ \& $ $ \& $ $ \& |[fill=blue!50]| \\ 
|[fill=blue!50]| \& $ $ \& $ $ \& $ $ \& $ $ \& $ $ \& $ $ \& $ $ \& $ $ \& $ $ \& $ $ \& $ $ \& $ $ \& $ $ \& $ $ \& $ $ \\ 
$ $ \& $ $ \& $ $ \& $ $ \& $ $ \& |[fill=blue!50]| \& $ $ \& $ $ \& $ $ \& $ $ \& $ $ \& $ $ \& $ $ \& $ $ \& $ $ \& $ $ \\ 
$ $ \& $ $ \& $ $ \& $ $ \& $ $ \& $ $ \& $ $ \& $ $ \& $ $ \& $ $ \& |[fill=blue!50]| \& $ $ \& $ $ \& $ $ \& $ $ \& $ $ \\ 
|[fill=gray]| \& $ $ \& $ $ \& $ $ \& $ $ \& $ $ \& $ $ \& $ $ \& $ $ \& $ $ \& $ $ \& $ $ \& $ $ \& $ $ \& $ $ \& $ $ \\ 
$ $ \& $ $ \& $ $ \& $ $ \& $ $ \& |[fill=gray]| \& $ $ \& $ $ \& $ $ \& $ $ \& $ $ \& $ $ \& $ $ \& $ $ \& $ $ \& $ $ \\ 
$ $ \& $ $ \& $ $ \& $ $ \& $ $ \& $ $ \& $ $ \& $ $ \& $ $ \& $ $ \& |[fill=gray]| \& $ $ \& $ $ \& $ $ \& $ $ \& $ $ \\ 
$ $ \& $ $ \& $ $ \& $ $ \& $ $ \& $ $ \& $ $ \& $ $ \& $ $ \& $ $ \& $ $ \& $ $ \& $ $ \& $ $ \& $ $ \& |[fill=gray]| \\ 
$ $ \& $ $ \& $ $ \& $ $ \& $ $ \& |[fill=blue!20]| \& $ $ \& $ $ \& $ $ \& $ $ \& $ $ \& $ $ \& $ $ \& $ $ \& $ $ \& $ $ \\ 
$ $ \& $ $ \& $ $ \& $ $ \& $ $ \& $ $ \& $ $ \& $ $ \& $ $ \& $ $ \& |[fill=blue!20]| \& $ $ \& $ $ \& $ $ \& $ $ \& $ $ \\ 
$ $ \& $ $ \& $ $ \& $ $ \& $ $ \& $ $ \& $ $ \& $ $ \& $ $ \& $ $ \& $ $ \& $ $ \& $ $ \& $ $ \& $ $ \& |[fill=blue!20]| \\ 
|[fill=blue!20]| \& $ $ \& $ $ \& $ $ \& $ $ \& $ $ \& $ $ \& $ $ \& $ $ \& $ $ \& $ $ \& $ $ \& $ $ \& $ $ \& $ $ \& $ $ \\ 
$ $ \& $ $ \& $ $ \& $ $ \& $ $ \& $ $ \& $ $ \& $ $ \& $ $ \& $ $ \& |[fill=blue!80]| \& $ $ \& $ $ \& $ $ \& $ $ \& $ $ \\ 
$ $ \& $ $ \& $ $ \& $ $ \& $ $ \& $ $ \& $ $ \& $ $ \& $ $ \& $ $ \& $ $ \& $ $ \& $ $ \& $ $ \& $ $ \& |[fill=blue!80]| \\ 
|[fill=blue!80]| \& $ $ \& $ $ \& $ $ \& $ $ \& $ $ \& $ $ \& $ $ \& $ $ \& $ $ \& $ $ \& $ $ \& $ $ \& $ $ \& $ $ \& $ $ \\ 
$ $ \& $ $ \& $ $ \& $ $ \& $ $ \& |[fill=blue!80]| \& $ $ \& $ $ \& $ $ \& $ $ \& $ $ \& $ $ \& $ $ \& $ $ \& $ $ \& $ $ \\ 
};

\node[left=0.5cm of V0] {$V_0$};
\node[left=0.125cm of V0] {$=$};
\node[left=0.5cm of V1] {,$V_1$};
\node[left=0.125cm of V1] {$=$};

\node[left=0.5cm of V3] {, $\ldots$, $V_3$};
\node[left=0.125cm of V3] {$=$};

\node[left=0.5cm of W0] {$W_0$};
\node[left=0.125cm of W0] {$=$};
\node[left=0.5cm of W1] {,$W_1$};
\node[left=0.125cm of W1] {$=$};

\node[left=0.5cm of W3] {, $\ldots$, $W_3$};
\node[left=0.125cm of W3] {$=$};
\end{tikzpicture}
}
\end{center}
\caption{Matrix multiplication - $V_k$ and $W_k$ examples for $d=4$}
\end{figure*}
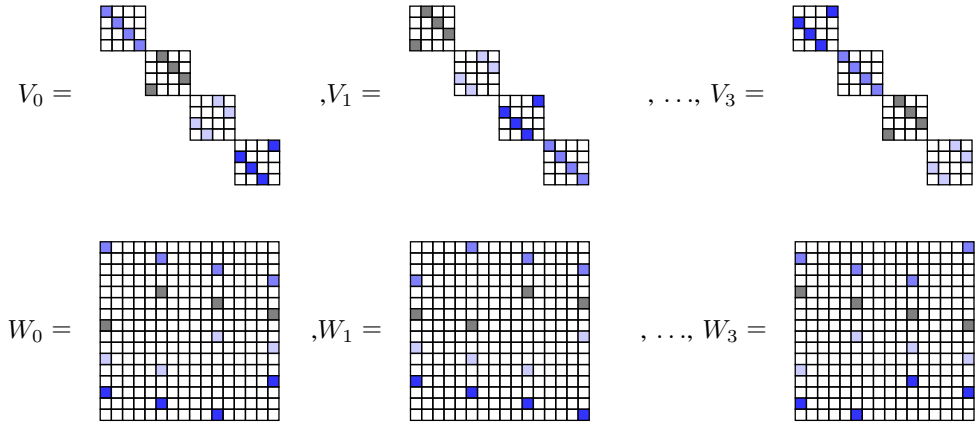

\begin{table}[h!]
  \begin{tabular}{C{1.75cm} | C{1.25cm} | C{1.25cm} | C{1cm} | C{1.25cm}}
    Methodology & No. of Ciphertexts & Complexity & Mult. depth & Relineariza-tions\\
    \hline
    Halevi-Shoup & $d$ & $\mathcal{O}(d^2)$ & $1 \bullet$ & $1$ \\
    JKLS & $1$ & $\mathcal{O}(d)$ & $1 \bullet + 2 \odot$ & $3$ \\
    Our work & $1$ & $\mathcal{O}(d)$ & $1 \bullet + 1 \odot$ & $2$ \\
  \end{tabular}
  \caption{Comparison of the different matrix multiplication algorithms.}
\label{tbl:HaleviShoup}
\end{table}

The matrices $V_k$ and $W_k$ are permuting the row-encoded matrices $A$ and $B$ respectively such that the matrix multiplication algorithm as we know can be implemented with element-wise multiplication and additions. Table \ref{tbl:HaleviShoup} summarizes the complexity differences of our method compared to Halevi-Shoup and JKLS and Algorithm \ref{algo:MMult} the implementation of our methodology. With this reduction of 1 $\odot$ operation in the matrix the multiplication algorithm we are able to perform 9 and 6 iterations on GD and AGD respectively, as opposed to 6 and 4 iterations if we were using the JKKS multiplication scheme.

\begin{algorithm}[ht]
\caption{HE Matrix Multiplication (\textsc{MMult})}
\label{algo:MMult}
\begin{algorithmic}
  \Function{MMult}{$\mathtt{A}, \mathtt{B}, d, a$}
  \State $\mathtt{AB} \gets \textsc{CipherText()}$
  \For {$k = 0$ to $d-1$}
    \State $\mathtt{A_k} \gets \textsc{LinTrans}(\mathtt{A_0}, V_k(a))$
    \State $\mathtt{B_k} \gets \textsc{LinTrans}(\mathtt{B_0}, W_k(1))$
    \State $\mathtt{AB_k} \gets \mathtt{A_k} \bullet \mathtt{B_k}$
    \State $\textsc{Relinearize}(\mathtt{AB_k})$
    \State $\mathtt{AB} \gets \mathtt{AB} \Plus \mathtt{AB_k}$
  \EndFor
  \State \textbf{return} $\mathtt{AB}$
  \EndFunction
\end{algorithmic}
\end{algorithm}

\subsection{Extension to other QP problems}

Extension to other QP problems is feasible. For instance, linear equality constraints could be handled by converting the problem to an unconstrained QP, or by solving primal-dual methods. These approaches sound completely viable but are subject to the multiplication depth limitations on the HE circuit. In other words, extra arithmetic operations can take place, but at the cost of reducing the number of maximum iterations. Linear inequality constraints are not directly supported, but an approach would be to decrypt and re-encrypt at each iteration (not actually a practical solution).
\section{Numerical analysis} \label{sec:SecV}

The HE resource requirements are directly proportional to the capacity of the encryption circuit. The larger the circuit's capacity, the larger the computing memory and computational power required at each arithmetic operation. Given our computing resources and the parameters of the Microsoft SEAL \cite{SEAL21}, the largest circuit we can implement, in the CKKS scheme, has a multiplication depth of 18. At each iteration, AGD and GD have a multiplication depth of 3 and 2, resulting in a cap of 6 and 9 steps for AGD and GD respectively. Even though we could not implement longer iterations due to limitations on our computational resources, we believe that the results would apply in that case too.


We start by running both algorithms with initial condition $x_0 \neq x^*$ for the same matrix $Q$ and decrypt the outcomes at every iteration. Figure \ref{fig:gd_steps} shows that at each iteration the solution gets closer to the optimal $x^*$.
\begin{figure}[h!]
  \centering
  \includegraphics[width=0.475\textwidth]{./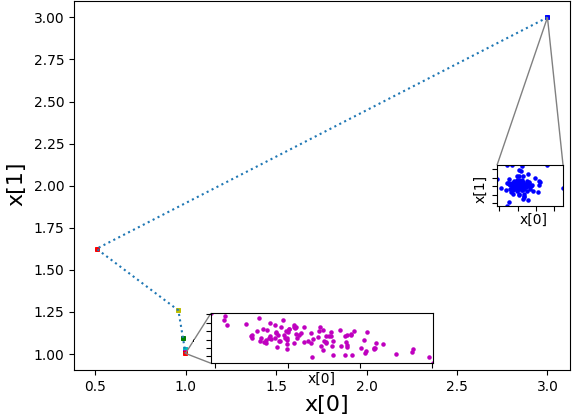}
  \caption{Decrypted HE-AGD steps for 100 repetitions with a 2-by-2 matrix with $\kappa=2$, optimal value at $x^* = (1,1)$ initial condition $x_0=(3,3)$. A similar behavior is observed for HE-GD.}
  \label{fig:gd_steps}
\end{figure}

\begin{figure*}[ht!]
  \centering
  \includegraphics[width=0.45\textwidth]{./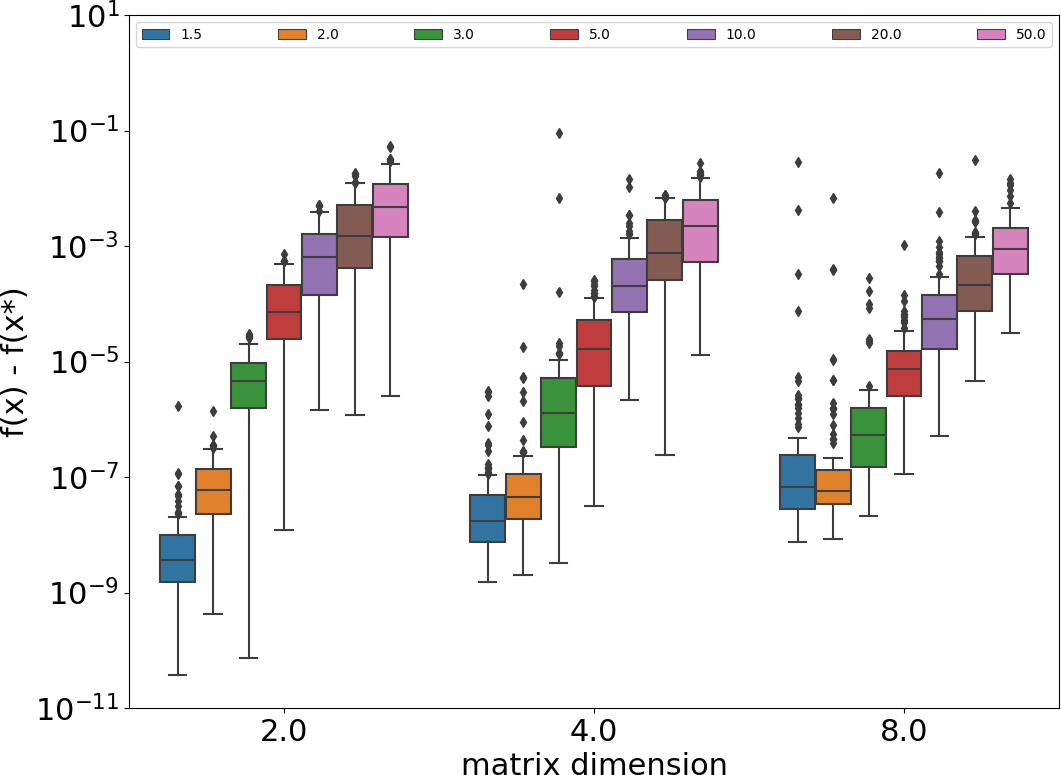}
  \includegraphics[width=0.45\textwidth]{./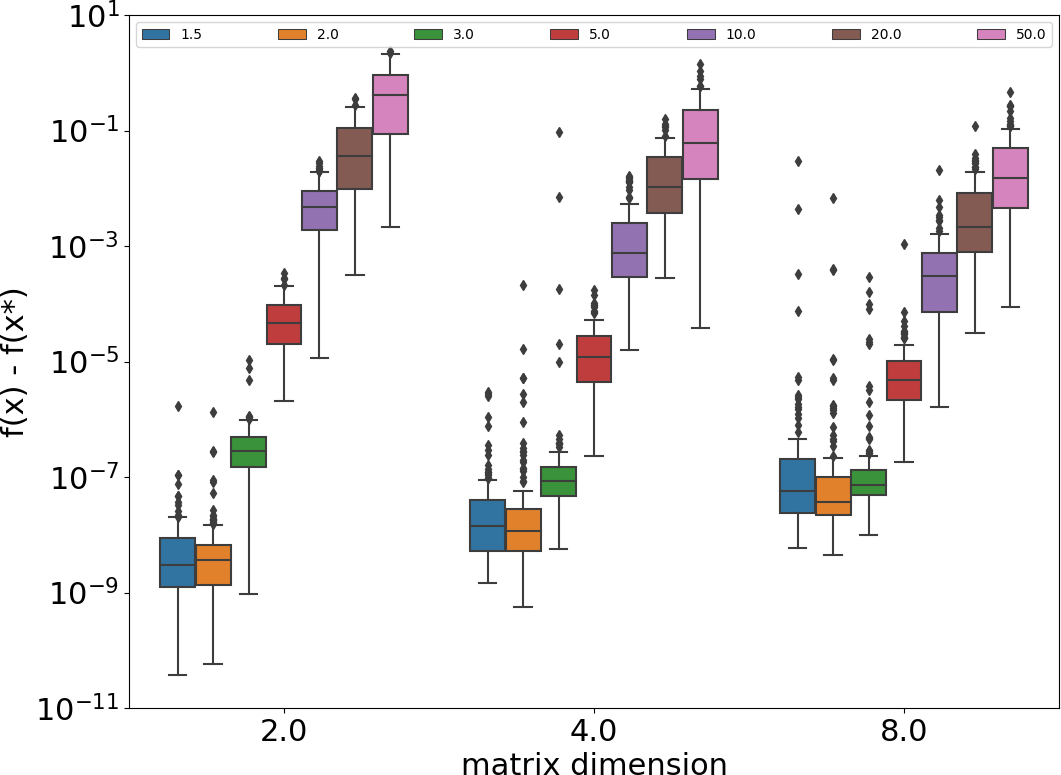}
  \caption{Box plot of tolerance $f(x) - f(x^*)$ for AGD (left) and GD (right) with matrices of different dimensions (x axis) and different $\kappa$ (colored bars).}
  \label{fig:agd_gd_comparison}
\end{figure*}

As discussed in Section \ref{sec:SecII} AGD exhibits a superior convergence rate, hence it allows meeting a given convergence (in terms of optimality) tolerance with fewer iterations. However, in case of encryption, the computational limits imposed by the allowable depth of the encryption circuit introduces a trade-off, as AGD involves more arithmetic operations compared to GD (see Section IV-A). As such, it might be computationally impossible to perform the number of iterations needed by AGD to meet a given tolerance. We investigate this trade-off numerically, and show that the preferred method depends on the condition number $\kappa$ of the quadratic matrix $Q$.

To analyze this trade-off numerically we generate QP instances with condition number $\kappa$ ranging from 1.5 to 50. To collect numerical statistics on the effect of $\kappa$, for each $\kappa$ we generate 100 sets of randomly generated symmetric positive-definite matrices $Q$ of dimension 2, 4 and 8\footnote{Higher dimensions are also feasible, and this is independent of the multiplication depth limits.}, and associated random vectors $p$. We solve each QP instance via the HE-GD and HE-AGD methods with an initial condition $x_0$ such that $\norm{x_0 - x^*}_2 = 1$. Our goal is to investigate which algorithm achieves better tolerance values (in terms distance to the optimal value) at the last iteration allowed by the encryption's circuit depth. The latter is iteration 6 for AGD and iteration 9 for GD. 
Figure \ref{fig:agd_gd_comparison} illustrates the distribution of the tolerance $f(x) - f(x^*)$ (optimality gap of the returned solution $x$ from the optimal cost $f(x^*$) for AGD (top) and GD (bottom) with matrices of different dimensions (x axis) and different values of the condition number $\kappa$ (color code).

Table \ref{tbl:agd_gd_comparison} highlights the important observations stemming from Figure \ref{fig:agd_gd_comparison}. In particular, GD profits from the extra iterations and achieves better tolerance (getting closer to the optimal) values when $\kappa \leq 5$ (upper table). Yet, AGD outperforms GD in cases where $\kappa>5$ (lower table) although with worse tolerance values (that is, further form the optimal). This shows that the limitation on multiplication depth is a major issue when handling matrices with high values of $\kappa$. 

\begin{table}[h!]
\centering
  \begin{tabular}{c|c|c|c|c}
     \diagbox{$d$}{$\kappa$}& 1.5 & 2 & 3 & 5 \\
    \hline
    2 & \cellcolor{blue!25} $3 \cdot 10^{-9}$ & \cellcolor{blue!25} $4 \cdot 10^{-9}$ & \cellcolor{blue!25} $3\cdot10^{-7}$ & \cellcolor{blue!25} $5\cdot10^{-5}$ \\
    4 & \cellcolor{blue!25} $1 \cdot 10^{-8}$ & \cellcolor{blue!25} $1 \cdot 10^{-8}$ & \cellcolor{blue!25} $8\cdot10^{-8}$ & \cellcolor{blue!25} $1\cdot10^{-5}$ \\
    8 & \cellcolor{blue!25} $6 \cdot 10^{-8}$ & \cellcolor{blue!25} $4 \cdot 10^{-8}$ & \cellcolor{blue!25} $7\cdot10^{-8}$ & \cellcolor{blue!25} $5\cdot10^{-6}$
    \end{tabular}

  \begin{tabular}{c|c|c|c}
     \diagbox{$d$}{$\kappa$} & 10 & 20 & 50 \\
    \hline
    2 & \cellcolor{green!25} $7\cdot10^{-3}$ & \cellcolor{green!25} $2\cdot10^{-3}$ & \cellcolor{green!25} $5\cdot10^{-3}$\\
    4 & \cellcolor{green!25} $2\cdot10^{-4}$ & \cellcolor{green!25} $8\cdot10^{-4}$ & \cellcolor{green!25} $2\cdot10^{-3}$\\
    8 & \cellcolor{green!25} $6\cdot10^{-5}$ & \cellcolor{green!25} $2\cdot10^{-4}$ & \cellcolor{green!25} $9\cdot10^{-4}$ 
    \end{tabular}
  \caption{Comparison of AGD (6th iteration) against GD (9th iteration) for matrices of different sizes and conditional values $\kappa$. The upper and lower table indicate the range of values for $\kappa$ for which the GD and the AGD algorithm, respectively, are preferable in terms of returning a solution closer to the optimal one. The numbers represent the median tolerance level $f(x) - f(x^*)$ out of the 100 repetitions corresponding to different matrices $Q$.}
\label{tbl:agd_gd_comparison}
\end{table}

The code running the numerical examples presented here (https://github.com/f2cf2e10/agd-he) used our own Python wapper of the Microsoft SEAL \cite{SEAL21} C++ library (https://github.com/f2cf2e10/pSEAL). We used an Intel Xeon E5-1620 with 24GB of RAM machine running Debian 11.

\section{Conclusion} \label{sec:SecVI}

In this paper we studied, implemented, and analyzed both gradient and accelerated gradient descent algorithms to solve QP problems in a HE fashion. We evaluated different encryption schemes (BFV, BGV, and CKKS) and argued that CKKS is the only suitable scheme for implementing gradient descent methods as it allows for freedom in the choice of the algorithm's step size. In our implementation, AGD takes an extra multiplication operation at each step when compared to GD. As a result, AGD cannot run for as many steps as GD for an encryption circuit with the same security parameters and channel capacity (i.e. multiplication depth). We demonstrate that the condition number of the matrix of the quadratic term of the objective function plays an important role in determining which algorithm is preferred. For higher values of the condition number, AGD is preferred, as it converges faster to the solution, even if performing fewer iterations. Whereas for lower values of the condition number of the matrix GD performs better due to the extra iterations. In addition, we proposed a new HE matrix multiplication algorithm that is more efficient, in terms of multiplication depth, than other algorithms proposed in the literature. These observations have been verified by means of a numerical investigation. Yet, there are still many outstanding challenges in HE version of iterative numerical procedures. For instance, solving for constrained problems is in general a challenge, given the extra operations required to project into the constrained set. To solve these issues, further work should focus on optimizing for the multiplication depth and perhaps explore alternative ways to encode matrices in the HE scheme.


\bibliographystyle{plain}
\bibliography{main.bib}


\end{document}